\definecolor{BLACK}{gray}{0}
 \definecolor{WHITE}{gray}{1}
 \definecolor{RED}{rgb}{1,0,0}
 \definecolor{GREEN}{rgb}{0,1,0}
 \definecolor{BLUE}{rgb}{0,0,1}
 \definecolor{CYAN}{cmyk}{1,0,0,0}
 \definecolor{MAGENTA}{cmyk}{0,1,0,0}
 \definecolor{YELLOW}{cmyk}{0,0,1,0}
\def\h{{\bf h}}
\def\g{{\bf g}}
\newcommand{\bea}{\begin{eqnarray}}
\newcommand{\eea}{\end{eqnarray}}
\def\bi{\begin{itemize}}
\def\ei{\end{itemize}}
\def\bc{\begin{center}}
\def\ec{\end{center}}
\def\C{\hbox{$\mit I$\kern-.7em$\mit C$}}
\def\R{\hbox{$\mit I$\kern-.6em$\mit R$}}
\def\N{\hbox{$\mit I$\kern-.6em$\mit N$}}
\def\ket#1{|#1\rangle}
\newcommand{\one}{\mbox{$1 \hspace{-1.0mm}  {\bf l}$}}
\def\tr{\mathrm{tr}}
\def\ket#1{\left| #1\right>}
\newtheorem{theorem}{Theorem}
\newtheorem{corollary}[theorem]{Corollary}
\newtheorem{lemma}[theorem]{Lemma}
\begin{document}

\author{}

\author{C. Spee}
\affiliation{Institute for Theoretical Physics, University of
Innsbruck, Innsbruck, Austria}
\author{J.I. de Vicente}
\affiliation{Departamento de Matem\'aticas, Universidad Carlos III de
Madrid, Legan\'es (Madrid), Spain}
\author{D. Sauerwein}
\affiliation{Institute for Theoretical Physics, University of
Innsbruck, Innsbruck, Austria}
\author{B. Kraus}
\affiliation{Institute for Theoretical Physics, University of
Innsbruck, Innsbruck, Austria}
\title{Entangled pure state transformations via local operations assisted by finitely many rounds of classical communication}

\begin{abstract}
We consider generic pure $n$--qubit states and a general class of pure states of arbitrary dimensions and arbitrarily many subsystems. We characterize those states which can be reached from some other state via Local Operations assisted by finitely many rounds of Classical
Communication ($LOCC_{\N}$). For $n$ qubits with $n>3$ we show that this set of states is of measure zero, which implies that the maximally entangled set is generically of full measure if restricted to the practical scenario of $LOCC_{\N}$. Moreover, we identify a class of states for which any $LOCC_{\N}$ protocol can be realized via a concatenation of deterministic steps. We show, however, that in general there exist state transformations which require a probabilistic step within the protocol, which highlights the difference between bipartite and multipartite LOCC.

\end{abstract}
\maketitle

Multipartite entanglement plays a crucial role in many fields of physics \cite{reviews}. This is in particular so if all the correlations among the constituent systems result from entanglement, which is the case for pure states. The existence of these nonclassical correlations both in the bipartite and multipartite case have been pivotal in the development of quantum information theory. In this context, there are many applications of pure multipartite entanglement including quantum computation \cite{RaBr01}, metrology \cite{reviewmet}, and quantum communication protocols
\cite{SecretSh,reviews}. Furthermore, the entanglement properties of multipartite states have been proven successful in the study of condensed matter physics like in e.g.\ the identification of different phases \cite{AmFa08} and the development of numerical methods \cite{orus}. A deep understanding of entanglement is central to all these investigations and this has led to the development of entanglement theory, which aims at providing a solid framework for its characterization, quantification and manipulation.

As entanglement theory is a resource theory, where the free operations are those which can be realized via Local Operations assisted by Classical Communication (LOCC), the investigation of the latter is essential in this theory. This provides all possible protocols for entanglement manipulation for spatially separated parties and induces an operationally meaningful ordering in the set of entangled states, which allows to quantify
and qualify entanglement. Bipartite pure state entanglement is well understood due to the fact that all LOCC transformations among bipartite pure states can be easily characterized \cite{nielsen}. There, transformations under a larger class of operations, the so-called separable operations (SEP), can always be realized via LOCC \cite{gheorghiu}. Multipartite LOCC is far from being that simple.
In fact, it has been shown that infinitely many rounds of communication might be necessary in certain scenarios involving ensembles of states \cite{chit11}.
Note, however, that to date there exists no example where infinitely
many rounds of communication are required for pure state transformations. Further aggravating the matter, separable pure state transformations have been identified which cannot be realized via LOCC,
even if infinitely many rounds are utilized \cite{HeSp15}. Hence, the investigation of the mathematically much more manageable separable transformations leads to necessary, but not sufficient conditions for the existence of a transformation among pure multipartite states via LOCC. Other approaches in gaining insight into the complicated structure of multipartite entanglement are based on Local Unitary (LU) transformations \cite{Kr10}, which do not alter the entanglement contained in the state, and Stochastic LOCC (SLOCC) transformations \cite{slocc}. Both of them define an equivalence relation, namely two $n$--partite states, $\ket{\Psi}$ and $\ket{\Phi}$ are LU (SLOCC)--equivalent if there exist unitary (invertible) matrices $A_i$ for $i\in \{1,\ldots ,n\}$ such that $\ket{\Psi}\propto A_1\otimes \ldots \otimes A_n\ket{\Phi}$, respectively. However, these notions cannot be utilized to establish any ordering among the entanglement contained in the states as LOCC does. Hence, despite the fact that the structure of LOCC maps is mathematically very subtle \cite{Donald,chitambar2,chitambar3,cohen}, the understanding of possible transformations under LOCC is necessary in order to clarify the usefulness of different
states and to quantify entanglement, which can be done by any quantity which does not increase under LOCC \cite{reviews}.

In \cite{dVSp13,SpdV16,HeSp15} we generalized the notion of maximal entanglement to the multipartite case by identifying the minimal set of states, the Maximally Entangled Set (MES), which suffices to reach any other state via LOCC. Moreover, the LOCC transformations
among three-, four-qubit and three-qutrit states have been investigated in \cite{HeSp15,dVSp13,SpdV16,turgut,SaSc15}. Here, we consider the more realistic scenario of LOCC protocols consisting of finitely many rounds of communication,
i.e. $LOCC_{\N}$. We investigate such transformations among pure truly $n$--partite entangled states, $\ket{\Psi}, \ket{\Phi} \in \C^{d_1}\otimes \ldots \otimes \C^{d_n}$, with $d_i$ denoting the local dimension of
subsystem $i$, i.e.\ the rank of the corresponding reduced state. We are interested in the case where the final and initial state are in the same SLOCC class \cite{slocc}. That is, we do neither investigate for instance transformations from entangled 4-qubit to entangled 3-qubit states where the 4--th qubit factorizes, nor transformations where the local dimensions, $d_i$, differ between the initial and the final state. Moreover, as we are only interested in non--trivial transformations, i.e. not LU transformations, we often refer by a state to a LU--equivalence class. Every LOCC protocol can then be described as follows. One party applies locally a measurement on his system and sends the information about the measurement outcome to the other parties, who then apply, depending on this information, LUs to their systems. Such rounds are concatenated until the transformation from the initial to the final states is accomplished deterministically. It is the dependency of the measurement on all the previous measurement outcomes
which makes LOCC so cumbersome to handle, even if only finitely many rounds are considered (see e.g.\cite{Donald,chitambar2,chitambar3,cohen}).

We investigate many SLOCC classes of states of arbitrary number of parties and local dimensions. In particular, for $n$-qubit states ($n>3$) and 3-qutrit states their union constitutes a generic set of states, i.e.\ of full measure. That is, there, our results apply to all states but a subset of measure zero. Despite the aforementioned difficulties, we present here as our first main result a succinct
characterization of all states $\ket{\Phi}$ in these classes which are reachable via $LOCC_{\N}$, i.e. for which there exists a state that can be
transformed (nontrivally, i.e. not with LUs) via some $LOCC_{\N}$ protocol into $\ket{\Phi}$ \footnote{As mentioned before we consider the initial and final states to belong to the same SLOCC class.}. Moreover, this set is of measure zero in the corresponding SLOCC class, which allows us to show that $n$--qubit states with $n>3$ are almost never reachable. The reason why such a general result can be derived is that the conditions for a state to be reachable are very stringent, which implies that only very particular states can be reached.
In order to explain the other results presented here, let us note that all LOCC transformations among pure states (including infinitely many rounds) studied so far can be realized via a particularly simple protocol \cite{HeSp15,dVSp13,SpdV16,turgut,SaSc15}. There, in each round the state is transformed {\it deterministically} into an intermediate (or the final) pure state. That is, for any measurement outcome the system is in a pure state and all these states are LU--equivalent. We call these protocols in the following all--deterministic. As these results hold for various numbers of subsystems and different local dimensions,
one might wonder whether every LOCC protocol can be divided into deterministic steps (as is also the case in the bipartite setting \cite{LoPopescu}).
We prove here, however, as our second main result, that this is not the case. In particular, we present an example of a pure state transformation where a probabilistic step is required (see Fig. 1). In contrast to that,
we identify classes of states for which indeed any protocol in $LOCC_{\N}$ can be divided into deterministic steps, which makes it particularly easy to analyze them. Note that these results clearly show the difference
between multipartite and bipartite LOCC.

The outline of the remainder of the paper is the following. After presenting our notation we define the SLOCC classes which are considered here. We characterize all states (in those SLOCC classes) which can be reached via $LOCC_{\N}$ and show that this set of states is of measure zero for $n$--qubit states. Next, we investigate which states are convertible, i.e. can be transformed into another state via $LOCC_{\N}$. This result can be used to characterize all--deterministic $LOCC_{\N}$ protocols, to which any previously known LOCC protocol belongs to. After that we show, however, that not any $LOCC_{\N}$ protocol is of this simple form by presenting a $LOCC_{\N}$ protocol which is not realizable via an all--deterministic transformation. We then briefly discuss that an interesting, but aggravating,
phenomenon can occur, namely that one party can unlock or lock the power of the other parties. That is, one party can enable or prevent the other parties to perform a deterministic step. Considering
instances where this phenomenon cannot occur, we identify a class of states for which any $LOCC_{\N}$ transformation can be realized via an all--deterministic $LOCC_{\N}$ protocol.

We denote throughout this paper by $\ket{\Psi_s}$ a $n$--partite state whose local stabilizer, $S_{\Psi_s}$, consists of finitely many LUs \cite{Gour}. That is, there exist only finitely many operators $S=S^{(1)} \otimes S^{(2)}\otimes \ldots \otimes S^{(n)}$, with $S\ket{\Psi_s}=\ket{\Psi_s}$ \footnote{Note that due to the small number of parameters the existence of a nontrivial local symmetry is already very stringent.}. Moreover, these operators are all unitary \footnote{As the stabilizer is a group, the fact that there exists $\ket{\Psi}$ such that $S_{\Psi}$ is finite implies that there exists another state in the same SLOCC class whose stabilizer is finite and unitary (see e.g. \cite{Gour}).}. Here and in the following, the superscript $(i)$ refers to the systems on which the
operator is acting on. It has been shown in \cite{bookWallach} that the stabilizer of a generic $n$-qubit state ($n>3$) is finite. Hence, such states can be written as  $g\ket{\Psi_s}$, with $g=\otimes_{i=1}^n g_i \in G\equiv GL(2)^{\otimes n}$ and $\ket{\Psi_s}$ as above. For qudit states some of the SLOCC classes also possess a representative which has only finitely many local symmetries. Moreover, as for instance in the case of three qutrits, these SLOCC classes can be generic too \cite{BrLu04}. All our results apply to any SLOCC class which can be represented by a state $\ket{\Psi_s}$ with $S_{\Psi_s}$ finite. The reason why the symmetries of $\ket{\Psi_s}$ are so important in these investigations becomes clear by noting that any local operator which maps a state $g\ket{\Psi_s}$ into a
state $h\ket{\Psi_s}$ must be of the form $h_1 S^{(1)} g_1^{-1}\otimes \ldots \otimes h_n S^{(n)} g_n^{-1}$, with $S\in S_{\Psi_s}$. Hence, deciding whether a transformation is possible (deterministically) depends very crucially on the properties of the stabilizer. In the following we choose $G_i=g_i^\dagger g_i$ such that $\tr(G_i)=1$ for any $i$ and similarly for $H_i=h_i^\dagger h_i$.

Let us now show which states, $\ket{\Phi}\propto h \ket{\Psi_s}$, are reachable via $LOCC_{\N}$ (from a state $\ket{\Psi}\propto g\ket{\Psi_s}$).

\begin{theorem} \label{theorem_1}
A state $\ket{\Phi}\propto h \ket{\Psi_s}$ is reachable, iff there exists $S \in S_{\Psi_s}$ such that the following conditions hold up to permutations of the particles:
\bi \item[i)] For any $i\geq 2$ $[H_i, S^{(i)}]=0$ and
\item[ii)] $[H_1, S^{(1)}]\neq 0$.\ei
\end{theorem}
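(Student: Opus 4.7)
The plan is to exploit the stabilizer parameterization stated just before the theorem: any local operator mapping $g'\ket{\Psi_s}$ to $h\ket{\Psi_s}$ takes the form $\bigotimes_i h_i S^{(i)}(g'_i)^{-1}$ for some $S\in S_{\Psi_s}$. Combined with the fact that every element of $S_{\Psi_s}$ is unitary, the condition $[A,S^{(i)}]=0$ is equivalent to $S^{(i)\dagger}A S^{(i)}=A$, which is precisely what one gets when demanding that composite local actions be unitary.

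\emph{Necessity.} Assume $\ket{\Phi}\propto h\ket{\Psi_s}$ is reachable and pick any witnessing $LOCC_\N$ protocol. One can strip away trailing LU rounds to isolate the last genuinely non-trivial round: a single-party measurement $\{M_k\}$, WLOG on party $1$, performed on a state $\ket{\Psi'}\propto g'\ket{\Psi_s}$ that is not LU-equivalent to $\ket{\Phi}$, such that for every outcome $k$ the other parties complete the transformation with local unitaries $U_i^{(k)}$ obeying $(M_k\otimes\bigotimes_{i\geq 2}U_i^{(k)})\ket{\Psi'}\propto\ket{\Phi}$. The parameterization then forces $M_k\propto h_1 S_k^{(1)}(g'_1)^{-1}$ and $U_i^{(k)}\propto h_i S_k^{(i)}(g'_i)^{-1}$ for some $S_k\in S_{\Psi_s}$. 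Unitarity of $U_i^{(k)}$, together with the trace normalizations and unitarity of $S_k^{(i)}$, yields $S_k^{(i)\dagger}H_i S_k^{(i)}=G'_i$ for all $k$ and all $i\geq 2$. Fixing any $k_0$ and putting $T_k\equiv S_k S_{k_0}^{-1}\in S_{\Psi_s}$, this collapses to $T_k^{(i)\dagger}H_i T_k^{(i)}=H_i$, i.e.\ $[H_i,T_k^{(i)}]=0$, which is condition (i). For (ii), if $[H_1,T_k^{(1)}]=0$ also held for every $k$, the same computation would give $M_k^\dagger M_k\propto \one$, so each $M_k$ would be a scaled unitary and the round would be LU — contradicting non-triviality. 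Hence some $T_k$ also satisfies (ii), and one takes $S=T_k$.

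\emph{Sufficiency.} Given $S\in S_{\Psi_s}$ obeying (i)–(ii), construct $\ket{\Psi'}\propto g'\ket{\Psi_s}$ by setting
\begin{equation}
G'_i = H_i \quad (i\geq 2), \qquad G'_1 = \tfrac{1}{2}\bigl(H_1 + S^{(1)\dagger}H_1 S^{(1)}\bigr),
\end{equation}
and $g'_i = (G'_i)^{1/2}$. Trace normalization survives because $S^{(1)}$ is unitary, positive definiteness follows from invertibility of $h_1$, and (ii) forces $G'_1\neq H_1$, so $\ket{\Psi'}$ is not LU-equivalent to $\ket{\Phi}$. Party $1$ then measures with $M_1=\tfrac{1}{\sqrt{2}}h_1(g'_1)^{-1}$ and $M_2=\tfrac{1}{\sqrt{2}}h_1 S^{(1)}(g'_1)^{-1}$; the completeness relation $M_1^\dagger M_1+M_2^\dagger M_2=\one$ is precisely the definition of $G'_1$. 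On outcome~$1$ each party $i\geq 2$ applies $h_i H_i^{-1/2}$ and on outcome~$2$ they apply $h_i S^{(i)} H_i^{-1/2}$; both are unitary, the second because (i) plus unitarity of $S^{(i)}$ yields $S^{(i)\dagger}H_i S^{(i)}=H_i$. Both branches deterministically yield $\ket{\Phi}$, exhibiting a one-round $LOCC_\N$ protocol that reaches $\ket{\Phi}$ non-trivially.

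The main obstacle is the structural reduction performed in the necessity direction: because later rounds in a multipartite LOCC protocol can depend intricately on all earlier outcomes, one must justify carefully that the last genuinely non-trivial branching can indeed be isolated as a stand-alone single-party measurement and that the intermediate state just before it is a valid source state in the same SLOCC class. Once this reduction is in place, the stabilizer parameterization turns everything else into linear algebra on the local reduced densities $H_i$ and $G'_i$.
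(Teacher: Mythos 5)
Your proof is correct and follows essentially the same route as the paper's: the necessity direction isolates the last non-trivial single-party measurement, uses the stabilizer parameterization plus unitarity of the completing local operations to force condition (i) for $S=S_kS_{k_0}^{-1}$ (the paper's $S_1S_2^\dagger$) and non-triviality of the measurement to force (ii), while the sufficiency direction constructs the same source state $G_1'=\tfrac12\bigl(H_1+S^{(1)\dagger}H_1S^{(1)}\bigr)$, $G_i'=H_i$ and two-outcome POVM (the paper's construction with $p=\tfrac12$). The only cosmetic differences are that you treat all outcomes at once via the completeness relation rather than singling out two non-proportional measurement operators, and you write the correcting unitaries explicitly as $h_iS^{(i)}H_i^{-1/2}$ instead of via a polar decomposition.
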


\begin{proof}
Let us first show that the conditions in the theorem are necessary and then construct the state $\ket{\Psi}$ which can be transformed to $\ket{\Phi}$.

As the protocol is finite there has to exist a last step of the protocol. At this step, there must exist a deterministic transformation from some state $\ket{\chi}$, which is obtained in one branch of the LOCC protocol, to $\ket{\Phi}$. As these two states need to be in the same SLOCC class, we write $\ket{\chi}\propto g \ket{\Psi_s}$ for some $g\in G$ \footnote{Notice that the local dimension cannot increase in any branch of an LOCC protocol. Also $\ket{\chi}$ can neither have smaller local dimension as otherwise it could not be transformed to $\ket{\Phi}$.}. W.l.o.g. we assume that party $1$ applies, at this step, a measurement, which we describe by the operators $\{A_i\}$, whereas all the other parties only apply LUs. Note that as the protocol is nontrivial, there must exist at least two outcomes, which are not related to each other by a unitary, i.e. $A_2^\dagger A_2 \not\propto
A_1^\dagger A_1$. Considering these two outcomes it must hold that $(A_1\otimes \one) g\ket{\Psi_s}=r_1 (\one \otimes_{i=2}^n U_i )\ket{\Phi}$ and $(A_2\otimes \one)g\ket{\Psi_s}=r_2 (\one \otimes_{i=2}^n V_i) \ket{\Phi}$
for some local unitaries $U_i, V_i$. The real numbers $r_1, r_2$ can be chosen strictly positive as $A_i\otimes \one \ket{\chi}=0$ implies, as $\ket{\chi}$ is in the same SLOCC class as $\ket{\Phi}$ and therefore the reduced states have full rank, that $A_i=0$. Using the symmetries of $\ket{\Psi_s}$, the equations above are equivalent to
\bea h^{-1}  (A_1 \otimes_{i=2}^n U_i^\dagger) g&=&r_1 S_1 \\
h^{-1}  (A_2 \otimes_{i=2}^n V_i^\dagger ) g&=&r_2 S_2,\eea where $S_1,S_2 \in S_{\Psi_s}$. Hence, we have \bea
A_1&=&r^{(1)}_1 h_1S_1^{(1)}g_1^{-1}, A_2=r^{(1)}_2 h_1S_2^{(1)}g_1^{-1}\\
g_i&=&r^{(i)}_1 U_i h_i S_1^{(i)}=r^{(i)}_2 V_i h_iS_2^{(i)}, \quad \forall i>1.\eea Here, $r_j=\prod_i r^{(i)}_j$, for $j=1,2$. Considering now the last equations for $g_i^\dagger g_i$ and using that $h_i$ is
invertible, one easily finds that $r^{(i)}_1=r^{(i)}_2$ $\forall i>1$ and therefore that the conditions (i) in Theorem \ref{theorem_1} are necessary for $S = S_1 S_2^\dagger$. Moreover, using that
$A_1^\dagger A_1 \not\propto A_2^\dagger A_2$ we find that condition (ii) is necessary for $S = S_1 S_2^\dagger$.

The construction of the state  $\ket{\Psi}\propto g \ket{\Psi_s}$ and the corresponding LOCC protocol which transforms $\ket{\Psi}$ into $\ket{\Phi}$ is now very simple. Choosing for
$i>1$ $G_i = H_i=(S^{(i)})^\dagger H_i S^{(i)}$, i.e. choosing $g_i=V_i h_i =W_i h_i S^{(i)}$, for some unitaries $V_i,W_i$, which have to exist as condition (i) is equivalent to the condition that
$h_i S^{(i)} (h_i)^{-1}$ is
unitary, and $G_1=p  H_1 + (1-p) (S^{(1)})^\dagger H_1 S^{(1)}$, for some $0<p<1$ allows to reach the state with the following LOCC protocol. Party $1$ measures the POVM consisting of the measurement operators
$ \sqrt{p}h_1 g_1^{-1},
\sqrt{1-p}h_1 S^{(1)} g_1^{-1}$. Depending on the outcome of this measurement, all the other parties $i$ apply either $V^\dagger_i$ or $W^\dagger_i$, respectively.

\end{proof}

Hence, once the symmetries of $\ket{\Psi_s}$ are known, it is very easy to decide whether a state is reachable via $LOCC_{\N}$ or not. For instance, consider $\ket{\Psi_s}$ with symmetries $\sigma_i^{\otimes n}$,
where $\sigma_i$ denotes here and in the following the Pauli operators. Then, due to Theorem \ref{theorem_1} it is straightforward to see that the state $h_1\otimes \one \ket{\Psi_s}$ is reachable for arbitrary $h_1$. However, the state
$h_1\otimes h_2\otimes \one \ket{\Psi_s}$ is not, if neither $H_1$ nor $H_2$ commutes with $\sigma_i$ for some $i$. As mentioned before, the considered SLOCC classes are generic for $n>3$ qubit states \cite{bookWallach}. Hence, Theorem \ref{theorem_1} characterizes (almost) all reachable states there. Due to that and the fact that for almost all states $h \ket{\Psi_s}$, the operator $h$ does not obey the commutation relations stated in Theorem \ref{theorem_1}, we obtain the following corollary, which we prove in \cite{SaDeV162}.

\begin{corollary}\label{corollary2}
The set of $n$--qubit states ($n>3$) which are reachable via a $LOCC_{\N}$ protocol is of measure zero. \end{corollary}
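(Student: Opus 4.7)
The plan is to apply Theorem~\ref{theorem_1} and show that each of the finitely many candidate reachability conditions cuts out a subset of codimension at least $2$ in the parameter space of the local positive unit-trace matrices $H_i$. Since a finite union of such sets is of measure zero, and since generic $n$-qubit states with $n>3$ belong to an SLOCC class whose representative $\ket{\Psi_s}$ has finite stabilizer, the corollary will follow.

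By Theorem~\ref{theorem_1}, $h\ket{\Psi_s}$ is reachable iff there exist a nontrivial $S\in S_{\Psi_s}$ and a permutation $\pi$ of the parties such that $[H_{\pi(i)},S^{(\pi(i))}]=0$ for all $i\ge 2$. Because $|S_{\Psi_s}|<\infty$ and there are only $n!$ permutations, the reachable set is the finite union of loci
\[
R_{S,\pi}=\bigl\{h:[H_{\pi(i)},S^{(\pi(i))}]=0,\ i\ge 2\bigr\}.
\]
The crux is the preliminary claim that every nontrivial $S\in S_{\Psi_s}$ has at least two non-scalar local components. If only $S^{(k)}$ were non-scalar, then, after absorbing the other scalars into a global phase, $S^{(k)}\otimes\bigotimes_{i\ne k}\one$ would stabilize $\ket{\Psi_s}$; expanding $\ket{\Psi_s}$ in its Schmidt decomposition across the cut $k\,|\,\mathrm{rest}$ and using that genuine $n$-partite entanglement forces the reduction $\rho_k$ to have full rank, one concludes $S^{(k)}\propto\one$, a contradiction.

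Given the claim, for any permutation $\pi$ at least one index $i\ge 2$ satisfies $S^{(\pi(i))}\not\propto\one$. For such an index the condition $[H_{\pi(i)},S^{(\pi(i))}]=0$ pins $H_{\pi(i)}$ to be diagonal in the (fixed) eigenbasis of $S^{(\pi(i))}$, leaving a single real parameter inside the three-dimensional Bloch ball of positive unit-trace $H_{\pi(i)}$; this is a codimension-$2$ cut. Hence each $R_{S,\pi}$ has codimension at least $2$, and the finite union of such loci is of Lebesgue measure zero in the parameter space of local states. The only step requiring genuine care is the single-site-symmetry claim above; once it is secured, the remainder reduces to the elementary observation that the commutant of a non-scalar qubit unitary is two-dimensional, and to verifying that the $H_i$-parameterization faithfully tracks LU-equivalence classes within the SLOCC orbit (which is immediate since $S_{\Psi_s}$ is finite).
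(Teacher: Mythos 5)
Your argument matches the paper's: the proof (deferred to the companion paper \cite{SaDeV162}) likewise combines Theorem~\ref{theorem_1} with the finiteness of $S_{\Psi_s}$, the observation that a nontrivial local symmetry of a state with full local ranks must act non-scalarly on at least two sites, and the fact that each resulting commutation constraint confines some $H_i$ to a line in its Bloch ball, so the reachable set is a finite union of measure-zero loci. The only step you gloss over --- as does the main text here --- is the Fubini-type argument needed to pass from ``measure zero within each SLOCC orbit'' to ``measure zero in the full $n$-qubit state space,'' since for $n>3$ the seed states $\ket{\Psi_s}$ themselves form a continuous family; this is routine but is part of what the companion paper supplies.
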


Note that this result applies also to all multipartite states of higher local dimensions, as long as the considered SLOCC classes are generic. This means that the MES (under $LOCC_{\N}$) has full measure in this case. Let us now investigate which states are nontrivially transformable to another state, i.e., are convertible. We call a state convertible via $LOCC_j$, if it can be converted by a single round of LOCC, where the nontrivial
measurement is performed by party $j$ and LUs are applied by the other parties. Considering w.l.o.g. that party 1 applies the measurement and using similar tools as in the proof of Theorem \ref{theorem_1}, one can easily prove
the following lemma \cite{SaDeV162}.
\begin{lemma} \label{lemmaconv}
A state $\ket{\Psi}\propto g\ket{\Psi_s}$ is convertible via $LOCC_1$ iff there exist $m$ symmetries $S_i\in S_{\Psi_s}$, with $m> 1$ and $H\in {\cal B} ({\cal H}_1)$, $H >0$ and $p_i> 0$ with $\sum_{i=1}^m p_i=1$ such that the following conditions hold
\bi \item[(i)] $[G_k,S_i^{(k)}]=0$ $\forall k>1$ and $\forall i \in \{1,\ldots,m\}$
\item[(ii)] $G_1=\sum_{i=1}^m p_i (S_i^{(1)})^\dagger H S_i^{(1)}$ and $H \neq S^{(1)} G_1 (S^{(1)})^\dagger$ for any $S \in S_{\Psi_s}$ fulfilling (i).
\ei
\end{lemma}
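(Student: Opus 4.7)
The plan is to mirror the proof of Theorem \ref{theorem_1}, adapted to a single round of communication with $m>1$ outcomes produced by party $1$'s POVM $\{A_i\}$. The central tool is again that any local operator mapping $g\ket{\Psi_s}$ to $h\ket{\Psi_s}$ must be of the form $hSg^{-1}$ with $S\in S_{\Psi_s}$; the new ingredient, absent from Theorem \ref{theorem_1}, is POVM-completeness $\sum_i A_i^\dagger A_i=\one$, which will be responsible for condition (ii).

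For necessity, assume the conversion is realized with target $\ket{\Phi}\propto h\ket{\Psi_s}$ and unitary corrections $U_i^{(k)}$ on parties $k>1$ in branch $i$. Imposing $(A_i\otimes\bigotimes_{k>1}U_i^{(k)\dagger})g\ket{\Psi_s}=r_i h\ket{\Psi_s}$ and invoking the stabilizer structure yields $A_i=r_i^{(1)}h_1 S_i^{(1)}g_1^{-1}$ and $g_k=r_i^{(k)}U_i^{(k)}h_k S_i^{(k)}$ for some $S_i\in S_{\Psi_s}$. Taking Hermitian squares of the $g_k$-equations and using $\tr G_k=\tr H_k=1$ will force $r_i^{(k)}=1$ and $G_k=S_i^{(k)\dagger}H_k S_i^{(k)}$ for every $i$ and every $k>1$. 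The key manipulation is then a gauge shift: setting $T_i:=S_1^{-1}S_i$ (so $T_1=\one$) and absorbing $S_1$ into the representative $h$ of the target (legitimate since $h$ is defined only up to right-multiplication by $S_{\Psi_s}$) turns the conjugation identities into the commutator relations $[G_k,T_i^{(k)}]=0$, which is condition (i). Condition (ii) then emerges by rewriting $\sum_i A_i^\dagger A_i\propto\one$ as $G_1=\sum_i p_i\,T_i^{(1)\dagger} H\, T_i^{(1)}$ with $H:=h_1^\dagger h_1$ (after the gauge shift) and $p_i\propto(r_i^{(1)})^2$; the normalization $\sum_i p_i=1$ follows from $\tr G_1=1$. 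The remaining inequality in (ii) will be read off from the standard LU-equivalence criterion: $g\ket{\Psi_s}$ and $h\ket{\Psi_s}$ are LU-equivalent iff there exists $S\in S_{\Psi_s}$ with $H_k=S^{(k)}G_k S^{(k)\dagger}$ for all $k$, which, once (i) is imposed, collapses to the single equation $H=S^{(1)}G_1 S^{(1)\dagger}$ for some $S$ obeying (i).

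For sufficiency, starting from data $\{T_i,p_i,H\}$ satisfying (i)-(ii), I would choose any $h_1$ with $h_1^\dagger h_1=H$, set $h_k:=g_k$ for $k>1$, and define Kraus operators $A_i:=\sqrt{p_i}\,h_1 T_i^{(1)}g_1^{-1}$. Completeness $\sum_i A_i^\dagger A_i=\one$ follows directly from (ii). For $k>1$, the operators $U_i^{(k)}:=g_k T_i^{(k)\dagger}g_k^{-1}$ are well-defined and unitary precisely because $[G_k,T_i^{(k)}]=0$. A short verification then shows that in every branch, after these response unitaries are applied, the state collapses to the same $h\ket{\Psi_s}$; the second part of (ii) guarantees this target is not LU-equivalent to $g\ket{\Psi_s}$, so the conversion is genuinely nontrivial.

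The hard part, as I see it, is the passage from the conjugation relations $G_k=S_i^{(k)\dagger}H_k S_i^{(k)}$ to the clean commutator form $[G_k,T_i^{(k)}]=0$; this requires recognising that a symmetry common to every branch can be absorbed into the freely chosen target representative. Once this gauge reduction is in place, everything else is either direct POVM-completeness bookkeeping or an application of the stabilizer-based LU-equivalence criterion already used in the proof of Theorem \ref{theorem_1}.
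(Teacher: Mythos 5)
Your proposal is correct and follows essentially the same route as the paper: necessity is obtained exactly as in the proof of Theorem \ref{theorem_1} (stabilizer decomposition of each branch, trace normalization forcing $r_i^{(k)}=1$, absorbing the symmetry of one branch into the freely chosen target representative so that the conjugation relations $G_k=S_i^{(k)\dagger}H_kS_i^{(k)}$ become the commutators of condition (i) and POVM completeness becomes condition (ii)), while sufficiency is the explicit construction the paper itself gives immediately after the lemma. The only slip is the dagger in your correction unitaries: they should satisfy $U_i^{(k)}g_k=g_kT_i^{(k)}$, i.e.\ $U_i^{(k)}=g_kT_i^{(k)}g_k^{-1}$, since with $g_kT_i^{(k)\dagger}g_k^{-1}$ branch $i$ ends in $h_1(T_i^{(1)})^2\otimes g_2\otimes\cdots\otimes g_n\ket{\Psi_s}$ rather than the common target (unitarity is of course equivalent in both cases, so the fix is immediate).
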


Note that the first party can apply measurement operators $\{A_i= \sqrt{p_i} h S_i g_1^{-1}\}_{i=1}^m$ with probabilities $p_i=\tr(g_1^\dagger A_i^\dagger A_ig_1)$,
where $H=h^\dagger h$ is such that the conditions in (ii) are satisfied. Depending on the measurement outcome $i$, the other parties apply the LUs $U_i^{(k)}$ defined by
$U_i^{(k)} g_k=  g_k S_i^{(k)}$ $\forall k>1$ to obtain $h\otimes g_2 \otimes \ldots \otimes g_n\ket{\Psi_s}$. These unitaries exist due to condition (i).

Using Lemma \ref{lemmaconv} it is now straightforward to characterize all possible all--deterministic $LOCC_{\N}$ transformations, which can be viewed as a generalization of the bipartite transformations. All known LOCC transformations
among pure states are exactly of this kind. Moreover, we note that the set of states in the
MES (for $LOCC_{\N}$) which are convertible via  all--deterministic $LOCC_{\N}$ can be then characterized by simple conditions \cite{SaDeV162}. However, as we show in the following by constructing an explicit example, it turns out that all--deterministic transformations are not the most general ones. That is, certain transformations can only be accomplished by using an intermediate probabilistic step (see Fig.\ 1). This result shows that the involved structure of LOCC maps can be exploited to achieve pure-state transformations and exposes once again the difficulty of a general characterization.

In order to provide the aforementioned example we consider the SLOCC class given by the $L$-state of four qubits \cite{GoWa,SpdV16},
\begin{equation}\label{lstate}
|L\rangle=\frac{1}{\sqrt{3}}(|\phi^-\rangle|\phi^-\rangle+e^{i\pi/3}|\phi^+\rangle|\phi^+\rangle
+e^{i2\pi/3}|\psi^+\rangle|\psi^+\rangle),
\end{equation}
where $\ket{\phi^\pm}=(\ket{00}\pm \ket{11})/\sqrt{2}$ and $\ket{\psi^\pm}=(\ket{01}\pm \ket{10})/\sqrt{2}$. The symmetries of this state are given by $S_L=\{\{\one, U, U^2 \}\times \{\sigma_i\}_{i=0}^3\}^{\otimes4}$, where $U=\sqrt{i \sigma_y}\sqrt{i \sigma_x}$ \cite{SpdV16}. We will consider states of the form $g_1\otimes g_2\otimes\one\otimes\one|L\rangle$, which we denote in the following by $\{\g_1,\g_2\}$, where $\g_i$ denotes the Bloch vector of $G_i$ with $g_i=\sqrt{ \one/2 + \g_i \cdot \vec{\sigma}}$. The above referred example is given by the transformation $\{\g_1,\g_2\}\to\{\h_1,\h_2\}$, where $\{\g_1,\g_2\}=\{(x,x,2x),(x,-x,0)\}$ and $\{\h_1,\h_2\}=\{2(x,x,x),(x,x,-2x)\}$ with $x>0$ but small enough so that the corresponding operators are well defined. Notice that, since for $i=1,2$
$[G_i,S^{(i)}]\neq0$ $\forall S \in S_L$ $(S\neq\one)$, Lemma \ref{lemmaconv} guarantees that the initial state cannot be converted by an LOCC$_j$ protocol $\forall j$ and, hence, any deterministic
transformation starting from this state necessarily requires intermediate non-deterministic steps. The corresponding protocol has two steps. First, party 1 implements a two-outcome POVM that leads to the intermediate states $h_1\otimes g_2\otimes\one\otimes\one|L\rangle$ and $h_1 \sigma_3 \otimes g_2\otimes\one\otimes\one|L\rangle$ by measuring $M_1=\sqrt{3/4}h_1g_1^{-1}$ and $M_2=\sqrt{1/4}h_1\sigma_3g_1^{-1}$ (which fulfills $\sum_iM_i^\dag M_i=\one$). Since $[H_1,U]=0$, both intermediate states fulfill the premises of Lemma \ref{lemmaconv} so that they can be now transformed by LOCC$_2$ into the desired state. For this, in the first branch of the protocol party 2 measures $M'_1=\sqrt{1/3}h_2g_2^{-1}$ and $M'_2=\sqrt{2/3}h_2U^2g_2^{-1}$ (which is again a valid measurement), leading to the states $h_1\otimes h_2\otimes\one\otimes\one|L\rangle$ and $h_1\otimes h_2 U^2\otimes\one\otimes\one|L\rangle$. For the second outcome, parties 1, 3 and 4 additionally apply the unitary $U^2$, obtaining then $\{\h_1,\h_2\}$ as well since $[h_1,U^2]=0$ and $U^2\in S_L$. Analogously, in the second branch of the protocol party 2 measures $M''_1=\sqrt{1/3}h_2\sigma_3g_2^{-1}$ and $M''_2=\sqrt{2/3}h_2U\sigma_3g_2^{-1}$. In case of the first outcome, parties 3 and 4 apply the unitary $\sigma_3$ and, in case of the second, party 1 applies $U$ and 3 and 4 apply $U\sigma_3$ obtaining again the state $\{\h_1,\h_2\}$. In \cite{SaDeV162} we analyze further how these constructions arise and we generalize them.

\begin{figure}
   \centering
   \includegraphics[width=0.4\textwidth]{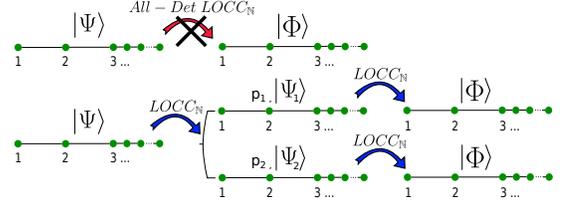}
   \caption{The transformation from the state $\ket{\Psi}$ to $\ket{\Phi}$ is impossible with a all--deterministic $LOCC_{\N}$. However, it becomes possible if party 1 performs a non--deterministic step, transforming $\ket{\Psi}$ with probability $p_1$ into $\ket{\Psi_1}$ and with probability $p_2$ into $\ket{\Psi_2}$. Both states can then be transformed deterministically into the final state $\ket{\Phi}$. Note that this example is in clear contrast to bipartite state transformation, where any transformation can be performed with a all--deterministic $LOCC_{\N}$.}
   \label{VaVs2}
   \end{figure}

It is worth mentioning that multipartite $LOCC_{\N}$ manipulation allows for an interesting phenomenon that we name locking or unlocking the power of other parties: it can be that the action of some party prevents or allows the others to perform deterministic nontrivial transformations. In \cite{SaDeV162} we provide examples of this feature and analyze general conditions on $S_{\Psi_s}$ that are necessary for unlocking to be possible. Also, imposing further conditions on $S_{\Psi_s}$ allows to find SLOCC classes where any possible $LOCC_{\N}$ transformation can be realized via an all--deterministic transformation. An instance is the case $S_{\Psi_s}=\{\sigma_i^{\otimes n}\}_i$ (see \cite{SaDeV162} for the proof). Moreover, for these classes SEP (and, hence, also infinitely many round LOCC) protocols are all--deterministic.

We have investigated $LOCC_{\N}$ transformations among pure states in certain SLOCC classes of arbitrary dimension and system sizes. We characterized all reachable states in Theorem \ref{theorem_1}. Moreover, we provided examples of SLOCC classes where even any SEP transformation is all--deterministic $LOCC_{\N}$. That is, in each step of a protocol a deterministic LOCC transformation is performed. All these transformations
can then be characterized
using Lemma \ref{lemmaconv}. However, we showed that there exist transformations among pure states that require more elaborate LOCC protocols which include non-deterministic intermediate steps. This fact prevents
the characterization of pure-state $LOCC_{\N}$ transformations via Theorem \ref{theorem_1} and  Lemma \ref{lemmaconv} from being complete. In summary, putting these results together with previous investigations on LOCC \cite{dVSp13,SpdV16,HeSp15} the following picture emerges. While for bipartite pure state transformations we have that all--deterministic $LOCC_{\N} = LOCC_{\N}=LOCC=SEP$, in the multipartite case we showed that all--deterministic $LOCC_{\N}\subsetneq LOCC_{\N}$ and $LOCC \subsetneq SEP$. It remains open whether $LOCC_{\N}=LOCC$ for pure states, which would be an interesting topic for future research. Our results also show that exact LOCC transformations among pure states are rarely possible (cf. Corollary \ref{corollary2}).  This suggests further work in order to develop new tools to investigate approximate transformations.

\begin{acknowledgments}
We thank Gilad Gour and Nolan Wallach for pointing out and explaining the proof presented in Ref. \cite{bookWallach} to us.
After completing this manuscript it was proven that generic n-qubit-states ($n\geq 5$) have actually a trivial stabilizer \cite{GoKr16}. This gives an immediate alternative proof of our Corollary 2 in this case, which is moreover extendable to infinite-round protocols. The research of CS, DS and BK was funded by the Austrian Science Fund (FWF): Y535-N16. DS and BK thank in addition the DK-ALM: W1259-N27. The research of JIdV was funded by the Spanish MINECO through grants MTM2014-54692-P and MTM2014-54240-P and by the Comunidad de Madrid through grant QUITEMAD+CM S2013/ICE-2801.
\end{acknowledgments}

\newpage

\end{document}